\newtheorem{proof}{Proof}[section]
\let\classAND\AND
\let\AND\relax
\let\AND\classAND
 \titlespacing{\section}{0pt}{*0}{*0}
 \titlespacing{\subsection}{0pt}{*0}{*0}
 \titlespacing{\subsubsection}{0pt}{*0}{*0}
\begin{document}

\begin{frontmatter}
%\runtitle{Insert a suggested running title}  % Running title for regular 
                                              % papers but only if the title  
                                              % is over 5 words. Running title 
                                              % is not shown in output.

\title{Approximate constrained stochastic optimal control via parameterized input inference}
% {Parameterized input inference for approximate structured stochastic optimal control} % Title, preferably not more 
                                                % than 10 words.

\thanks[footnoteinfo]{The work was supported by the U.S. National Science Foundation (NSF) under Grant No. 1925147, 2212582, and 2241585.}
\thanks[footnoteinfo]{A part of this manuscript was published in Automatica Volume 171, January 2025.}
\thanks[footnoteinfo]{~~~Corresponding author: Shahbaz P Qadri Syed (shahbaz\_qadri.syed@okstate.edu)}
\author[okstate]{Shahbaz P Qadri Syed}\ead{shahbaz\_qadri.syed@okstate.edu},    % Add the 
\author[okstate]{He Bai}\ead{ he.bai@okstate.edu}

\address[okstate]{Mechanical and Aerospace Engineering, Oklahoma State University, Stillwater, OK, 74078}

\begin{keyword}                           % Five to ten keywords,  
inference-based control; structured control; parametric optimization; multi-agent systems; stochastic control.               % chosen from the IFAC 
\end{keyword}                             % keyword list or with the 
                                          % help of the Automatica 
                                          % keyword wizard

\begin{abstract}                          % Abstract of not more than 200 words.
Approximate methods to solve stochastic optimal control (SOC) problems have received significant interest from researchers in the past decade. Probabilistic inference approaches to SOC have been developed to solve nonlinear quadratic Gaussian problems. In this work, we propose an Expectation-Maximization (EM) based inference procedure to generate state-feedback controls for constrained SOC problems. We consider the inequality constraints for the state and controls and also the structural constraints for the controls. We employ barrier functions to address state {and control} constraints. We show that the expectation step leads to smoothing of the state{-control} pair while the the maximization step on the non-zero subsets of the control parameters allows inference of structured stochastic optimal controllers. We demonstrate the effectiveness of the algorithm on unicycle obstacle avoidance, four-unicycle formation control, and quadcopter navigation in windy environment examples. In these examples, we perform an empirical study on the parametric effect of barrier functions on the state constraint satisfaction. We also present a comparative study of  smoothing algorithms on the performance of the proposed approach.
\end{abstract}
\end{frontmatter}

\section{Introduction}
Stochastic optimal control (SOC) is defined as the problem of finding a controller that minimizes an expected cost in the presence of uncertainty and dynamics constraint. The uncertainty is either in the form of noisy observations or process noise that approximates  model uncertainties in the system. A solution to the SOC problem can be found by solving the nonlinear stochastic Hamilton-Jacobi-Bellman (HJB) equation~\cite{oce}. In general, its numerical solution is computationally intractable due to the curse of dimensionality resulting from the discretization of the space and time~\cite{oct}. A fast and locally approximate solution to the SOC problem is the Linear Quadratic Gaussian (LQG) case where the SOC problem is solved for the noise-free optimal trajectory and a local LQG model is constructed as perturbation around this trajectory. The local linear quadratic regulator computes a reasonable approximate solution to the original SOC problem if the model is close to the optimal noise-free trajectory.

The general duality between control and estimation~\cite{gd} and the notion of relating the cost and log-likelihood have motivated a new class of methods to approximately solve the SOC problem in a non-LQG setting. These methods are often referred to as \textit{control-as-inference} methods in literature which solve the SOC problem as an inference problem on a probabilistic graphical model (PGM). A PGM is a graphical model encoding complex relationships between random variables in the form of a graph. It is widely used in statistics and machine learning to model joint probability distributions of random variables. This graphical representation of probability distribution is advantageous as it allows the decomposition of the joint probability distribution as a product of factors by exploiting the structure of the model. Moreover, algorithms developed in this framework have shown propitious results in real-world applications (see e.g.~\cite{aico},~\cite{aicot},~\cite{convaico},~\cite{constraico},~\cite{i2c},~\cite{effsoc},~\cite{syed2023parameterized}).
A common limitation of the above  inference-based control approaches is the restriction to linear feedback controllers to achieve closed-form updates in a Gaussian setting. It is well known that nonlinear systems typically admit nonlinear optimal controllers, and hence the use of the existing linear controllers will yield sub-optimal performance in a nonlinear setting. In our prior work~\cite{syed2023parameterized}, we propose the Parameterized Input Inference for Control (\textit{PIIC}) algorithm where the controller is parameterized by a (possibly) nonlinear basis function of the state which allows formulating the unconstrained SOC problem as a parameter inference problem. 
Hence, one of the contributions of this paper is that we employ a barrier function approach to solve constrained SOC problems using the PIIC algorithm.

  In recent years, the design of structured controllers has received a lot of attention for applications in large-scale systems and multi-agent systems. A structured controller reduces the computational load by translating the topology of networked systems to the sparsity of the controller, facilitating distributed  controls at subsystems. An example of structured control is distributed optimal control for multi-agent systems, where the control of each agent contains information only from a subset of the agents. However, to the best of our knowledge, none of the existing inference-based control approaches have been developed in the structured control domain owing to the challenge of encoding and preserving the structure imposed on the control gain. Hence, the main contribution of this work is that we propose a \textit{structured}-PIIC algorithm to solve structured SOC problems in an inference-based control framework.

 The main contributions of this work are as follows:
   1)  We enhance the formulation of the PIIC algorithm~\cite{syed2023parameterized} to address constrained SOC problems, where the constraints include both state, control constraints and structural constraints on the state-feedback controllers. Although {structured} optimal control has been investigated for deterministic systems (see  e.g.,~\cite{Lin2011},~\cite{fardad2014},~\cite{jovanovic2016}), our approach provides an effective {structured} control solution for stochastic systems. The resulting algorithm is an instance of the EM procedure which has a guaranteed convergence to local optima. 
 2) We empirically demonstrate the effectiveness of the proposed algorithm with respect to constraint satisfaction and {structured} control using unicycle control problems. The algorithm outperforms the commonly-used Iterative Linear Quadratic Gaussian (ILQG) approach~\cite{ilqg} with reduced mean cost and cost variance.
  
The rest of the paper is organized as follows. Section~\ref{sec:ibc} reviews the formulation of the SOC problem in an inference-based control framework. Section~\ref{sec:sc} presents our algorithm to address constrained SOC problems.
Section~\ref{sec:simex} demonstrates the efficacy of our approach on 
a unicycle model {in constrained control and structured control scenarios}. 
 Section~\ref{sec:conclusion} concludes the paper.
 
\textit{Notation:} Let $\mathcal{N}(y|a,A)$ represent a random variable $y$ satisfying a Gaussian distribution in the normal form with mean $a \in \mathbb{R}^d$ and covariance $A \in \mathbb{R}^{d \times d}$ given by 
% \begin{align*}
    $\mathcal{N}(y|a,A) = \frac{1}{(2\pi)^{\frac{d}{2}}|A|^{\frac{1}{2}}}{\exp} \left({-\frac{1}{2}} (y - a)^\intercal A^{-1} (y - a) \right),$
    % \end{align*}
where $|A|$ represents the determinant of $A$. We use \text{blkdiag}$(A_1,A_2,\cdots,~A_n)$ to denote a block diagonal matrix with matrices $A_1,~A_2,~\cdots,~A_n$ on its principal diagonal. $\mathbb{I}_n$ denotes the identity matrix of size $n$. $\otimes$ denotes the Kronecker product. $\text{Tr}(\cdot)$ denotes the trace operator, and $\mathbb{E}(\cdot)$ denotes the expectation operator. $\mathbf{1}_{m\times n}$, $\mathbf{0}_{m\times n}$ denote the $m\times n$ matrices with entries $1$ and $0$, respectively. 

\section{Inference-based Stochastic Optimal Control}
\label{sec:ibc}
Consider a dynamical system given by
\begin{align}
    x_{t+1} &= F(\tau_t) + \eta_t,
    \label{eq:taumodel}
\end{align}
\useshortskip
where $\tau_t = [x^\intercal_t, u^\intercal_t]^\intercal \in \mathbb{R}^{n_x+n_u}$ is the state-control vector at time $t$,  $x_t \in \mathbb{R}^{n_x}$ and $u_t \in \mathbb{R}^{n_u}$ denote the state and control at time $t$, respectively. $F{: \mathbb{R}^{n_x}\times\mathbb{R}^{n_u}\rightarrow \mathbb{R}^{n_x}}$
is a nonlinear mapping of $x_t, u_t$, and $\eta_t \sim \mathcal{N}(\eta_t|0, \Sigma_{\eta_t})$ 
 represents additive Gaussian noise that models the uncertainty in the dynamics.
For a given finite-horizon $T,$ and a state-control sequence $[x_T, \tau_{0:T-1}]$, define the trajectory cost as 
  $ \mathcal{C}(x_T, \tau_{0:T-1}) = c_T(x_T) + \sum_{t=0}^{T-1} c_t(\tau_t),$
where $c_t: \mathbb{R}^{n_x+n_u} \rightarrow \mathbb{R}$ is a nonlinear mapping from the state-control space to the cost space for $t < T$ and $c_T: \mathbb{R}^{n_x} \rightarrow \mathbb{R}$ is a nonlinear mapping from the state space to the cost space at the terminal time $T$.
The considered SOC problem is given by
\vspace{0.3cm}
\begin{align}
    \underset{u_{0:T-1}}{\text{min}}&\mathbb{E}[\mathcal{C}(x_T, \tau_{0:T-1})]\label{eq:SOC}\\
    \text{such that } x_{t+1} &\sim \mathcal{N}(x_{t+1}|F(\tau_t), \Sigma_{\eta_t}),\nonumber\\
        \mathcal{K}(\tau_t) &> 0,\nonumber
\end{align}
where
$\mathcal{K}(\cdot) \in \mathbb{R}^{n_{\text{in}}}$ is such that $\mathcal{K}_j(\cdot):\mathbb{R}^{n_x} \times \mathbb{R}^{n_u} \rightarrow \mathbb{R}$, $j = 1,\cdots,n_{\text{in}}$ is a nonlinear mapping that defines an inequality constraint.
We assume that the feedback controller $u_t$ at each time step is parameterized by a (possibly nonlinear) basis function of the state, $\mathcal{B}_t(x_t) \in \mathbb{R}^{n_b}$, and unknown parameters $\Theta_t \in \mathbb{R}^{n_b \times n_u}$ such that
 \begin{align}
     p(u_t|x_t) = \mathcal{N}(u_t|\Theta_t^\intercal \mathcal{B}_t(x_t), \Sigma_{\delta_t}), 
     \label{eq:pcontr}
 \end{align}
 where $\delta_t$ represents a zero-mean Gaussian noise with covariance $\Sigma_{\delta_t}$ that models the uncertainty in control.

 The PGM for the SOC problem~\eqref{eq:SOC} is constructed with the state-control sequence as latent variables and the sequence of binary random variables $\mathcal{O}_{t} \in \{0, 1\}$, $t=0,\cdots, T$, as observed variables. The binary random variable $\mathcal{O}_t$ represents the notion of optimality or task fulfillment at each time step, i.e., $\mathcal{O}_t = 1$ when optimal state and action are observed at time $t$. Similar to the general duality between estimation and control~\cite{gd}, probabilistic inference approaches relate the probabilities to cost by assuming that the negative log-likelihood of observing the optimality/task fulfillment at time $t$ is proportional to the stage cost $c_t$, i.e.,
\begin{align}
    p(\mathcal{O}_t=1|\tau_t) &\propto \exp\{-c_t(\tau_t)\}.
    \label{eq:likelihood}
\end{align}
Hence, the likelihood of observing optimality at each time step is high if and only if the cost incurred is low.
We have shown in our prior work~\cite{syed2023parameterized} that the parameterization in~\eqref{eq:pcontr} yields nonlinear controllers for the unconstrained version of~\eqref{eq:SOC} using the EM procedure. The focus of this work is to extend the formulation to constrained and structured SOC problems.

 \section{Constrained Stochastic Optimal Control}\label{sec:sc}
We consider two types of constraints in the SOC problem. Section~\ref{sec:state} addresses inequality constraints on $\tau_t$, which are particularly useful for maintaining safety of the system and creating bounded controls. Section~\ref{sec:structure} examines structural constraints on the control, which can be used for designing distributed controllers. Corresponding examples are demonstrated in Section~\ref{sec:simex}. 
\subsection{State and control constraints}\label{sec:state}
We present an approach to embed inequality constraints on $\tau_t$ into the inference-based control formulation in Section~\ref{sec:ibc}. We are motivated by the barrier function method, which is a popular approach in optimization literature to solve a constrained optimization problem as a sequence of unconstrained optimization problems by adding a high cost for approaching the boundary of feasibility region from the interior~\cite[Chapter 5]{bertsekas2016}. It is also similar to the potential function approach commonly used for collision avoidance and motion planning \cite{Kavraki2008}. 
 
 Let the safe set for  constraint $j = \{1, \cdots, n_{\text{in}}\}$  be given by $\mathcal{C}_{s,j} = \{\tau_t \in \mathbb{R}^{n_x+n_u}| \mathcal{K}_j(\tau_t) > 0\},$
where $\mathcal{C}_{s,j}$ is assumed to be non-empty $\forall j$. A barrier function $B(\tau)$ is continuous in the interior of $\mathcal{C}_{s,j}$ and goes to $\infty$ as one of the constraints $\mathcal{K}_j$ approaches 0 from positive values. Motivated by this approach, we define a relaxed barrier function for each constraint, denoted by $c_{\text{in},j}(\tau_t)$, that evaluates to zero if and only if $\tau_t \in \mathcal{C}_{s,j}$, and is positive otherwise, i.e.,
 \begin{align}
    c_{\text{in},j}(\tau_t) &=  (\psi_j(\tau_t) )^\intercal Q^{\text{in}}_j  \psi_j(\tau_t) 
    \begin{cases}
        = 0, & \text{if $\tau_t \in \mathcal{C}_{s,j}$}\\
        > 0, & \text{otherwise,}
    \end{cases}
    \label{eq:constrcost}
\end{align}
where $\psi(\tau_t)$ is a (possibly) nonlinear function of $\tau_t$.
The $c_{\text{in},j}(\tau_t)$ can be considered the cost for the satisfaction of constraint $j$. It is positive when the constraint is violated and zero otherwise. As shown later, we employ a likelihood function $\exp(- c_{\text{in},j}(\tau_t))$ to encode the satisfaction of constraint $j$ into our inference-based control. According to~\eqref{eq:constrcost}, the likelihood function evaluates to $1$ in the safe set $\mathcal{C}_{s,j}$, which is the maximum of $\exp(- c_{\text{in},j}(\tau_t))$. Thus, satisfaction of constraint $j$ is encoded with a higher likelihood of occurrence. 

Let $\mathcal{O}^\tau_t$, 
$\mathcal{O}^{\text{in},j}_t$ denote the binary random variables corresponding to observing optimality in the cost, and in the satisfaction of constraint $j$, respectively. We prescribe
$p(\mathcal{O}_t=1|\tau_t) \propto p(\mathcal{O}^\tau_t=1|\tau_t) \prod_j p(\mathcal{O}^{\text{in},j}_t=1|\tau_t).$
Letting $p(\mathcal{O}^{\text{in},j}_t=1|\tau_t)\propto \exp(- c_{\text{in},j}(\tau_t))$, we rewrite 
\eqref{eq:likelihood} as 
\begin{align}\label{eq:total_likelihood}
p(\mathcal{O}_t=1|\tau_t) &\propto \exp\{-c_t(\tau_t) - \sum_{j=1}^{n_{\text{in}}} c_{\text{in},j}(\tau_t)\}.
\end{align}
Suppose that the trajectory cost $c_t(\tau_t)$ is quadratic. Adding the barrier function in~\eqref{eq:constrcost} as a cost to $c_t(\tau_t)$ yields  
$\forall~t=0,\cdots,T,$
\begin{align}
    &c_t(\tau_t) + \sum_{j=1}^{n_{\text{in}}} c_{\text{in},j}(\tau_t) 
    =  
    (x_t - x^d_t)^\intercal Q_t (x_t - x^d_t) 
   \nonumber\\
   &~~+ (u_t - u^d_t)^\intercal R_t (u_t - u^d_t)
    + \sum_{j = 1}^{n_{\text{in}}} (\psi_j(\tau_t))^\intercal Q^{\text{in}}_j  (\psi_j(\tau_t)),
    \label{eq:SOCconstr}
\end{align}
 where $Q_t \succeq 0, \ R_t \succeq 0,$ 
and $Q^{\text{in}}_{j} \succeq 0$, $j=1,\cdots,n_{\text{in}}$, are the  cost matrices. It then follows from~\eqref{eq:total_likelihood} and~\eqref{eq:SOCconstr} that
\begin{align}
    p(\mathcal{O}_t=1|\tau_t) &\propto\exp\{-\alpha(z^*_t - h(\tau_t))^\intercal \Gamma_t (z^*_t - h(\tau_t)))\}\nonumber\\
    &= \mathcal{N}(z_t = z_t^*|h(\tau_t), (\alpha\Gamma_t)^{-1}),
    \label{i2cobs}
\end{align}
where $\Gamma_t = \text{blkdiag}(
    Q_t,R_t,
    Q^{\text{in}}_1,\cdots,Q^{\text{in}}_{n_{\text{in}}})\in \mathbb{R}^{{n^*}\times {n^*}}$,
$h(\tau_t) = \bigg[
    \tau_t^\intercal
    ~~\psi(\mathcal{K}_1(\tau_t))~~\cdots~~\psi(\mathcal{K}_{n_{\text{in}}}(\tau_t))
\bigg]^\intercal  \in \mathbb{R}^{n^*},$
 $z_t^* = \begin{bmatrix}
    (\tau^d_t)^\intercal
    & 0&\cdots&0
\end{bmatrix}^\intercal  \in \mathbb{R}^{n^*}$ with ${\tau^d_t}^\intercal=[(x^d_t)^\intercal~(u^d_t)^\intercal]$, $n^* = (n_x+ n_u +n_{in})$, and $\alpha$ is the scale factor (hyperparameter) introduced to optimize the covariance of 
$\mathcal{O}_t$ to maximize the expected log-likelihood.

 An optimal trajectory is computed as the mean of the conditional or joint posterior distribution of the state-control trajectory given that the optimality
 is observed throughout the entire trajectory, i.e., $\mathcal{O}_{0:T} = 1$. The objective of the PIIC algorithm is to infer the parameters $\Theta_{0:T-1}$ and $\alpha$ that maximize the log-likelihood, i.e.,
\begin{align}
    \Theta_{0:T-1}^{*}, \alpha^* = \underset{\Theta_{0:T-1}, \alpha}{\text{argmax}} \log[ p(\mathcal{O}_{0:T}=1| \Theta_{0:T-1}, \alpha)].
    \label{eq:optobj}
\end{align}
The optimization problem in~\eqref{eq:optobj} is generally intractable. 
Thus, we resort to computing the parameters using {the} EM algorithm. {The EM algorithm is an iterative algorithm used to find maximum likelihood solutions for models with latent variables. It performs consecutive expectation (E-step) and maximization (M-step) steps in each iteration. The E-step computes the expected log-likelihood over the posterior distribution of latent variables and the consequent M-step computes the parameters that maximize this expectation. Each iteration of the EM algorithm results in a non-decreasing expected log-likelihood, thus guaranteeing convergence to a local maximum. We refer interested readers to~\cite{PRML} for a detailed introduction to the EM algorithm.}

Denote $\tau_{0:T-1}$ by $\tau$, $\mathcal{O}_{0:T}=1$ by $\mathcal{O}$, and $\Theta_{0:T-1}$ by $\mathbf{\Theta}$. Then the objective in~\eqref{eq:optobj} is rewritten as
\begin{align}   
\log[ p(\mathcal{O}| \mathbf{\Theta}, \alpha)] = \log \big[\int p(x_T, \tau,\mathcal{O}| \mathbf{\Theta}, \alpha)d\tau dx_T \big].
\label{eq:emobj}
\end{align}
The integrand in~\eqref{eq:emobj} is proportional to the joint posterior distribution given by
\begin{align}
    &p(x_T, \tau, \mathcal{O}, \mathbf{\Theta}, \alpha) = p(x_0) p(\mathcal{O}_T=1|x_T,\alpha) \nonumber\\
     &\qquad\prod_{t=0}^{T-1} p(x_{t+1}|\tau_t)p(\mathcal{O}_t=1|\tau_t, \alpha)p(u_t|x_t, \Theta_t).
    \label{eq:post}
\end{align}
Introducing $q(x_T,\tau)$, a known tractable distribution of $x_T$ and $\tau$,  we  obtain 
\begin{align*}
\log[ p(\mathcal{O}| \mathbf{\Theta}, \alpha)]=
    \log \bigg[\underset{q(x_T,\tau)}{\mathbb{E}}\left[\frac{p(x_T,\tau,\mathcal{O}|\mathbf{\Theta}, \alpha)}{q(x_T,\tau)}\right] \bigg]
    \end{align*}
    Using Jensen's inequality, we further get
    \begin{align}
    {\log[ p(\mathcal{O}| \mathbf{\Theta}, \alpha)]} &\geq\underset{q(x_T,\tau)}{\mathbb{E}} \log \bigg[\frac{p(x_T,\tau,\mathcal{O}|\mathbf{\Theta}, \alpha)}{q(x_T,\tau)} \bigg].
    \label{eq:jenineq}
\end{align}
Note that~\eqref{eq:jenineq} becomes equality for $q(x_T, \tau)= p(x_T,\tau|\mathcal{O})$. 
The PIIC algorithm optimizes the right-hand side of~\eqref{eq:jenineq} based on the EM procedure.
Hence, convergence to a local maximum is guaranteed~\cite{moon1996expectation}.

Substituting~\eqref{eq:post} in the M-step yields
 \begin{align}
     &\underset{\mathbf{\Theta}, \alpha}{\text{argmax}}\underset{q(x_T,\tau)}{\mathbb{E}}\bigg[\log p(x_0) + \sum_{t=1}^{T-1} \log p(x_{t+1}|\tau_t)+ \nonumber\\
     &\sum_{t=0}^T \log p(\mathcal{O}_t=1|\tau_t, \alpha) + \sum_{t=0}^{T-1}\log p(u_t|x_t, \Theta_t)\bigg].
      \label{eq:obj}
\end{align}
{To find {$\Theta_t^{k+1}$}, we take gradient of~\eqref{eq:obj} with respect to $\Theta_t$ and set it to zero, which  yields}
 \begin{align}
          {\Theta_t^{k+1} =  \left[\underset{q({\tau_t})}{\mathbb{E}}(\mathcal{B}_t(x_t)\mathcal{B}_t(x_t)^\intercal)\right]^{-1}\underset{ q({\tau_t})}{\mathbb{E}} (\mathcal{B}_t(x_t)u_t^\intercal).}
      \label{eq:piic}
 \end{align} 
It is straightforward to show that if the control parameter is time-invariant i.e., $\Theta_{0:T-1} = \Theta$ then 

\begin{align} 
\Theta^{k+1} = \big[\underset{q( \tau)}{\mathbb{E}}\big(\sum_{t=0}^{T-1}\mathcal{B}_t(x_t)(\mathcal{B}_t(x_t))^\intercal\big)\big]^{-1}\underset{q(\tau)}{\mathbb{E}} \big(\sum_{t=0}^{T-1} \mathcal{B}_t(x_t)u_t^\intercal\big).\label{eq:tipiic}\end{align}
Similarly,
 to find  $\alpha^{k+1}$ we take gradient of~\eqref{eq:obj} with respect to $\alpha$ and set it to zero, which yields
 \begin{align}
     & \alpha^{k+1} = \frac{(T-1)n_z + n_{z_T}}{\sum_{t=0}^\intercal \text{Tr}(\Gamma_t\underset{q({x_T}, \tau)}{\mathbb{E}}[(z_t^* - z_t)(z_t^* - z_t)^\intercal])},
     \label{eq:alpha}
 \end{align}
 where $q(x_T, \tau)= p(x_T,\tau|\mathcal{O})$.

In this paper, we define \textit{approximate inference} as the inference of the latent variables of a PGM. {Approximate inference} can also be defined as an approximation of the true posterior with a family of distributions that minimizes the KL divergence~\cite{aisoc1}.
Let $q_\pi(\tau) = \prod_{t=0}^{T-1} p(u_t|x_t) p(x_{t+1}|\tau_t),$
be the state-control distribution parameterized by $\Theta$ and 
    $q_{s}(\tau) = p(\tau|\mathcal{O})$
be the smoothed state-control distribution. 
\begin{prop}
The minimization of the KL divergence $KL(q_{s}|| q_\pi)$ is equivalent to the minimization of the objective~\eqref{eq:obj} with respect to the parameter $\Theta$. 
\end{prop}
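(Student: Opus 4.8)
The plan is to expand both objectives, discard every term that does not depend on $\Theta$, and verify that what remains is the same optimization problem in each case. Throughout I will treat the smoothed distribution $q_s(\tau)=p(\tau|\mathcal{O})$ (equivalently the joint posterior $q(x_T,\tau)=p(x_T,\tau|\mathcal{O})$ used in~\eqref{eq:obj}) as fixed, since it is the object produced by the E-step and held constant while $\Theta$ is optimized in the M-step; the only $\Theta$-dependence sits in the control factors $p(u_t|x_t,\Theta_t)$ of~\eqref{eq:pcontr}.

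First I would rewrite the divergence as $KL(q_s|| q_\pi)=\mathbb{E}_{q_s}[\log q_s(\tau)]-\mathbb{E}_{q_s}[\log q_\pi(\tau)]$. The entropy term $\mathbb{E}_{q_s}[\log q_s(\tau)]$ is $\Theta$-independent, so minimizing $KL(q_s|| q_\pi)$ over $\Theta$ is equivalent to maximizing $\mathbb{E}_{q_s}[\log q_\pi(\tau)]$. Using $q_\pi(\tau)=\prod_{t=0}^{T-1}p(u_t|x_t,\Theta_t)\,p(x_{t+1}|\tau_t)$ together with the fact that the dynamics factors $p(x_{t+1}|\tau_t)=\mathcal{N}(x_{t+1}|F(\tau_t),\Sigma_{\eta_t})$ do not involve $\Theta$, this reduces to maximizing $\mathbb{E}_{q_s}\big[\sum_{t=0}^{T-1}\log p(u_t|x_t,\Theta_t)\big]$ over $\Theta$.

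Next I would turn to~\eqref{eq:obj}, evaluated with $q(x_T,\tau)=p(x_T,\tau|\mathcal{O})$, and isolate its $\Theta$-dependent content: among the four groups of terms $\log p(x_0)$, $\sum_t\log p(x_{t+1}|\tau_t)$, $\sum_t\log p(\mathcal{O}_t=1|\tau_t,\alpha)$, and $\sum_t\log p(u_t|x_t,\Theta_t)$, only the last depends on $\Theta$. Hence optimizing~\eqref{eq:obj} over $\Theta$ amounts to maximizing $\mathbb{E}_{p(x_T,\tau|\mathcal{O})}\big[\sum_{t=0}^{T-1}\log p(u_t|x_t,\Theta_t)\big]$. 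Because this integrand is a function of $\tau$ alone, its expectation under the joint posterior equals its expectation under the $\tau$-marginal $p(\tau|\mathcal{O})=q_s(\tau)$, so this problem is identical to the one obtained from the KL side, establishing the claimed equivalence. (If ``minimization of the objective~\eqref{eq:obj}'' is read as minimization of the negated expected complete-data log-likelihood, the sign flips consistently on both sides and the conclusion is unchanged.)

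The argument is essentially bookkeeping of factorizations, so there is no deep obstacle. The two places to handle carefully are: (i) stating explicitly that $q_s$ — and hence its entropy, as well as all the $\Theta$-free factors of both $q_\pi$ and~\eqref{eq:obj} — is frozen during the M-step, so that these are genuine constants in the optimization over $\Theta$; and (ii) the marginalization identity $\mathbb{E}_{p(x_T,\tau|\mathcal{O})}[g(\tau)]=\mathbb{E}_{p(\tau|\mathcal{O})}[g(\tau)]$, which is the step that reconciles the $x_T$-augmented objective~\eqref{eq:obj} with the $x_T$-free KL divergence in the statement.
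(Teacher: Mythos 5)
Your proof is correct, and it reaches the conclusion by a slightly different (and arguably cleaner) route than the paper. The paper's proof expands $KL(q_s\|q_\pi)$ into entropy and cross-entropy terms, then explicitly differentiates the cross-entropy with respect to $\Theta$ under the Gaussian controller parameterization~\eqref{eq:pcontr}, obtaining the closed-form stationary point~\eqref{eq:KLpiic} and observing that it coincides with the M-step update~\eqref{eq:tipiic}; equivalence is thus established by matching the two closed-form optimizers. You instead establish equivalence at the level of the objectives themselves: after freezing $q_s$ and discarding $\Theta$-independent terms (entropy of $q_s$, dynamics factors, $\log p(x_0)$, the $\mathcal{O}_t$-likelihood terms), both problems reduce to maximizing $\mathbb{E}_{q_s}\bigl[\sum_{t=0}^{T-1}\log p(u_t|x_t,\Theta_t)\bigr]$, so they are the same optimization problem regardless of the form of $p(u_t|x_t,\Theta_t)$ and without needing the stationary point to be unique or computable in closed form. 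Your explicit handling of the marginalization $\mathbb{E}_{p(x_T,\tau|\mathcal{O})}[g(\tau)]=\mathbb{E}_{p(\tau|\mathcal{O})}[g(\tau)]$ also closes a step the paper leaves implicit. What the paper's route buys in exchange is the concrete update formula, which ties the proposition directly to the algorithmic update~\eqref{eq:tipiic}; if you want your argument to serve the same expository purpose, you could append one line noting that maximizing the common reduced objective under~\eqref{eq:pcontr} yields exactly~\eqref{eq:tipiic}.
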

\begin{proof}
From the definition, we have
    \begin{align}
        KL(q_{s}|| q_\pi) &= \int_\tau q_{s} \log(\frac{q_{s}}{q_\pi}) d\tau\nonumber\\
        &= \underset{q_{s}}{\mathbb{E}} \log(q_{s}) - \underset{q_{s}}{\mathbb{E}} \log(q_\pi).
    \label{eq:kld}
    \end{align}
    To minimize~\eqref{eq:kld} w.r.t. $\Theta$, we take the gradient and set it to zero, resulting in 
    \begin{align}
      \Theta^* =
      \left[\underset{q_{s}}{\mathbb{E}}[\sum_{t=0}^{T-1}\mathcal{B}_t(x_t)(\mathcal{B}_t(x_t))^\intercal]\right]^{-1}\left[\underset{q_{s}}{\mathbb{E}} [\sum_{t=0}^{T-1}\mathcal{B}^\intercal_t(x_t)u_t]\right],
      \label{eq:KLpiic}
    \end{align}
which is equivalent to \eqref{eq:tipiic}. \qed
\end{proof}

 \subsection{{Structured control}}\label{sec:structure}
{
Structured optimal control primarily deals with the design of static optimal controllers for interconnected systems with topological constraints. These topological constraints are translated as sparsity in the feedback gain. The problem of designing optimal controllers with structured feedback gains has been well studied for deterministic systems (e.g., see~\cite{Lin2011,fardad2014,jovanovic2016}). However, it has not been fully explored for stochastic systems.
We impose a structural constraint on the controller gain matrix $\Theta_t$.  
We assume that the state $x_t$ and the control $u_t$ are composed of $N$ subsystem states and $M$ subcontrols, respectively, i.e.,
$x_t=[(x^1_t)^\intercal,\cdots,  (x^N_t)^\intercal]^\intercal$ and $u_t=[(u^1_t)^\intercal,\cdots, (u^M_t)^\intercal]^\intercal$. The $x^i_t$ and $u^j_t$ can be multidimensional, $i=1,\cdots,N$, $j=1,\cdots,M$. Denote by $b^i(x^i_t) \in \mathbb{R}^{n_{b^i}}$ and by $\theta_t^{ij}$ the basis function  corresponding to $x_t^i$ and the submatrix of the controller gain $\Theta_t$ corresponding to $u^j$ and $b^i(x^i_t)$, respectively. The subcontrols $u^j_t$, $j=1,\cdots,M$, are parameterized as 
\begin{align}
    u^j_t ={\begin{bmatrix}
    (\theta_t^{1j})^\intercal&(\theta_t^{2j})^\intercal&\cdots&(\theta_t^{Nj})^\intercal
\end{bmatrix}} {\mathcal{B}_t(x_t)} +\delta^j_t,
\label{eq:subcontr}
\end{align}  
where $\mathcal{B}_t(x_t) = [
    (b^1(x^1_t))^\intercal~
    (b^2(x^2_t))^\intercal~
    \cdots~
(b^N(x^N_t))^\intercal]^\intercal$, $\delta^j_t \sim \mathcal{N}(\delta^j_t|0, ({\sigma^{j}_t})^2)$. We assume that $\delta^j_t$, $j=1,\cdots,M$, are i.i.d. zero mean Gaussian noise.
Following the notation in~\eqref{eq:pcontr}, we have 
$$\Theta_t = \begin{bmatrix}
    (\theta_t^{11})^\intercal&(\theta_t^{21})^\intercal&\cdots&(\theta_t^{N1})^\intercal\\
    \vdots&\vdots&\vdots&\vdots\\
    (\theta_t^{1M})^\intercal&(\theta_t^{2M})^\intercal&\cdots&(\theta_t^{NM})^\intercal
\end{bmatrix}^\intercal.$$   
Let $\mathcal{F}$ be the set of ordered pairs such that $(i,j)\in \mathcal{F}$ if the subcontrol $u^j_t$ can receive information from the subsystem state $x^i_t$. Consider the structured SOC problem:
\begin{align}
    \underset{u_{0:T-1}}{\text{min}}&\mathbb{E}[\mathcal{C}(x_T, \tau_{0:T-1})]    \label{eq:SOCstr}\\
    \text{such that } x_{t+1} &\sim \mathcal{N}(x_{t+1}|F(\tau_t), \Sigma_{\eta_t}),\nonumber\\
        \mathcal{K}(\tau_t) &> 0,~
        \theta^{ij}_{t} = \mathbf{0}_{n_{b^i} \times n_{u^j}}, \text{ if } (i,j) \notin \mathcal{F}.\nonumber
\end{align}
Our key idea to solve the structured SOC problem is to decompose the problem into multiple unstructured SOC problems in a lower dimensional subspace of nonzero entries corresponding to each element of $u_t$. Then, the inferred parameters are mapped back to the original vector space through an inverse transformation which 
preserves the structure during the inference procedure.

To capture the structural constraints, we define a structural identity (under element-wise matrix multiplication) of the feedback gain $\Theta_t$, denoted by $\varPhi \in \mathbb{R}^{n_b\times n_u}$. The $\varPhi$ is a block matrix whose $(i,j)^{\text{th}}$ block is all ones if $u^j_t$ depends on $b^i(x^i_t)$ and otherwise all zeros, that is, $\forall$ $i=1,\cdots,N$, $j=1,\cdots,M$,
\begin{align}
    \varPhi_{ij} = \begin{cases}
    \mathbf{1}_{n_{b^i} \times n_{u^j}}, & \text{if }(i,j)\in \mathcal{F}\\
    \mathbf{0}_{ n_{b^i}\times n_{u^j}}, & \text{otherwise.}
\end{cases}
\end{align}
Let $u_t^p$ be the $p^{\text{th}}$ element of $u_t$ and $\varPhi_p \in \mathbb{R}^{n_b}$ be the  $p^{\text{th}}$ column of $\varPhi$, where $p  = \{1,\cdots,n_u\}$. For every $\varPhi_p$, there exists an $\mathcal{S}_p: \mathbb{R}^{n_b} \rightarrow \mathbb{R}^{\tilde{n}_b}$ that maps $\varPhi_p$ to {its} lower dimensional  non-zero entries $\tilde{\varPhi}_p \in \mathbb{R}^{\tilde{n}_b}$, where $\tilde{n}_b \leq n_b$. Hence, $\mathcal{S}_p $ can be applied to the $p^{\text{th}}$ column of $\Theta_t$, denoted by $\Theta^p_t$, to extract its non-zero entries, denoted by $\tilde{\Theta}_t^p \in \mathbb{R}^{\tilde{n}_b}$, i.e., $\tilde{\Theta}_t^p=\mathcal{S}_p{\Theta}_t^p$. Similarly, we let $\tilde{\mathcal{B}}^p_t({x}_t)=\mathcal{S}_p{\mathcal{B}}^p_t({x}_t)$. Also, for every $\mathcal{S}_p$, there exists an $\mathcal{S}_p':\mathbb{R}^{\tilde{n}_b} \rightarrow \mathbb{R}^{n_b}$ that maps $\tilde{\Theta}_t^p$ back to ${\Theta}_t^p$.

For example, consider an interconnected system with four subsystem states and three subcontrols, i.e., $x_t = [x^1_t~x^2_t~x^3_t~x^4_t]^\intercal\in\mathbb{R}^4$ and $u_t = [
    u^1_t~u^2_t~u^3_t
]^\intercal\in\mathbb{R}^3$. Assume that $u^i_t$'s are linear functions of the states. Consider the following structural constraints on $u^i_t$'s: $u^1_t$ depends only on $x^1_t$ and $x^3_t$, $u^2_t$ only on $x^1_t$, $x^2_t$, and $x^4_t$, and $u^3_t$  only on $x^3_t$ and $x^4_t$. Then~\eqref{eq:subcontr} takes the form $${u_t} = \underbrace{\begin{bmatrix}    \theta^{11}&0&\theta^{31}&0\\   \theta^{12}&\theta^{22}&0&\theta^{42}\\    0&0&\theta^{33}&\theta^{43}\\
\end{bmatrix}}_{\Theta^\intercal_t} \underbrace{\mathcal{B}_t(x_t)}_{x_t} + \delta_t,$$ where $\delta_t = [\delta^1_t~\delta^2_t~ \delta^3_t]^\intercal$. By definition, $\varPhi = \begin{bmatrix}
    1&0&1&0\\
    1&1&0&1\\
    0&0&1&1\\
\end{bmatrix}^\intercal.$  Then,  $\varPhi_1=\begin{bmatrix}
    1& 0& 1& 0
\end{bmatrix}^\intercal,$ $\mathcal{S}_1= \begin{bmatrix}
    1& 0& 0& 0\\
    0& 0& 1& 0
\end{bmatrix}$ and $\mathcal{S}'_1 = \mathcal{S}_1^\intercal$.
Therefore, $\tilde{\Theta}_t^1 =  \mathcal{S}_1 \Theta_t^1=[\theta^{11}~\theta^{13}]^\intercal$ and $\Theta_t^1 = \mathcal{S}'_1 \tilde{\Theta}_t^1$. 
Similarly, $\mathcal{S}_2, ~\mathcal{S}_3$ can be computed corresponding to $\Theta_t^2,~\Theta_t^3$ respectively.

Using the notation $\tilde{\Theta}^p_t$ and $\tilde{\mathcal{B}}^p_t({x}_t)$, taking the gradient of~\eqref{eq:obj} against $\tilde{\Theta}^p_t$, and equating it to zero yields the update equation for $\tilde{\Theta}^p_t$  as
  \begin{align}
     (\tilde{\Theta}^p_t)^{k+1}
      = \left[\underset{q(\tau_t)}{\mathbb{E}} [\tilde{\mathcal{B}}^p_t(x_t)(\tilde{\mathcal{B}}^p_t(x_t))^\intercal]\right]^{-1}\underset{q(\tau_t)}{\mathbb{E}} [\tilde{\mathcal{B}}^p_t(x_t)(u_t^p)^\intercal].
       \label{eq:piicstr}
 \end{align}
From an implementation perspective, a time-invariant control parameter $\Theta$ may be advantageous. Following a similar approach to~\eqref{eq:piicstr}, we obtain the time-invariant parameter update as 
\begin{align}
     &(\tilde{\Theta}^p)^{k+1} = \nonumber\\
     &\big[\underset{q(\tau)}{\mathbb{E}} \big[\sum_{t=0}^{T-1}\tilde{\mathcal{B}}^p_t(x_t)(\tilde{\mathcal{B}}^p_t(x_t))^\intercal\big]\big]^{-1}\underset{q(\tau)}{\mathbb{E}} \big[\sum_{t=0}^{T-1}\tilde{\mathcal{B}}^p_t(x_t)(u_t^p)^\intercal\big].
     \label{eq:piicstrti}
\end{align}
The covariance of the controller $\sigma^{p}$ is updated $\forall$~$p = {1,\cdots,n_u}$ and $t = {0,\cdots,T-1}$ using 
\begin{align}
\sigma_t^{p} =\underset{q(\tau)}{\mathbb{E}}(u_t^p -(\tilde{\Theta}^p_t)^\intercal\tilde{\mathcal{B}}_t({x}^p_t))(u_t^p -(\tilde{\Theta}^p_t)^\intercal\tilde{\mathcal{B}}_t({x}^p_t))^\intercal.
    \label{eq:sdt}
\end{align}
Algorithm~\ref{alg:gpiic} below  summarizes the structured parameterized input inference for control (structured PIIC) algorithm. It performs the E-step and the M-step iteratively until convergence. 
The structure imposed on the control parameter $\Theta$ is preserved by performing updates on the non-zero subsets of each subsystem using~\eqref{eq:piicstr} and~\eqref{eq:sdt}. In our implementation, we claim convergence of the algorithm if the infinity norm of the difference between the state trajectories in two consecutive iterations is less than a threshold.
\begin{algorithm}[htpb]
\caption{Structured {PIIC} algorithm}\label{alg:gpiic}
\begin{algorithmic}
     {\REPEAT
     \STATE {\text{E-step: Compute}
     \begin{align*}
     q^{k+1}&= p({x_T},\tau|\mathcal{O},\mathbf{\Theta}^k,\alpha^k)\\
      Q(\mathbf{\Theta},\alpha|\mathbf{\Theta}^k,\alpha^k)&=  \underset{({x_T},\tau)\sim q^{k+1}}{\mathbb{E}}\log [p(x_T, \tau,\mathcal{O}|\mathbf{\Theta}, \alpha)]
     \end{align*}    
      \text{M-step:}}
      \FOR{t= $0$ : $T-1$}{\FOR {p = $1$ : $n_u$}{
      \STATE{\text{Update $\tilde{\Theta}_t^p,$ $\sigma_t^p$ using \eqref{eq:piicstr},~\eqref{eq:sdt}, respectively.}}
        \STATE  $\Theta_t^p = \mathcal{S}_p'(\tilde{\Theta}_t^p)$        
          }
      \ENDFOR
      \STATE{\text{Update $\Theta_t = \begin{bmatrix}
      \Theta_t^1&\cdots&\Theta_t^{n_u}
      \end{bmatrix}^\intercal$}}
      \STATE{Update \text{$\Sigma_{\delta_t} = \text{blkdiag}(\sigma_t^1,~\cdots,~\sigma_t^N)$}}}
      \ENDFOR
      \STATE{Update $\alpha$ using \eqref{eq:alpha}}
      \UNTIL{convergence}
 }\end{algorithmic}
 
\end{algorithm}
}
As shown in the appendix, Algorithm~\ref{alg:gpiic} recovers the Gaussian I2C~\cite{effsoc} for linear dynamics without any constraints if $\mathcal{B}_t(x_t) = [x^\intercal_t~
        1
    ]^\intercal$ and $\Theta_t$ does not have a specific structure. 
\section{Simulation examples}
\label{sec:simex}

In this section, we demonstrate the effectiveness of the PIIC algorithm for inference of constrained and structured stochastic optimal controllers. In Section~\ref{sec:obsavoidance}, we demonstrate the effectiveness of the barrier function approach for a unicycle obstacle avoidance problem. We also study the performance of the PIIC for the choice of two smoothing approaches and compare them with the ILQG baseline. In Section~\ref{sec:formncontr}, we illustrate the utility of the PIIC for distributed formation control of four unicycle robots. The common simulation parameters are 
step size $dt$ = $0.05$, and the state cost matrix $Q_t$ =  $\mathbb{I}_3$. In Section~\ref{sec:quadsim}, we consider the problem of navigating a quadcopter to a desired position in an obstacle-filled windy environment using the PIIC algorithm. 

\subsection{Obstacle avoidance}\label{sec:obsavoidance}
Consider a unicycle robot whose dynamics are given as
\begin{align}
X_{t+1} &=  X_t + dt f(X_t, u_t) +\eta_t,\label{eq:unicycle}
\end{align}
where at any time instant $t$, $X_t = [x_t~y_t~\theta_t]^\intercal \in \mathbb{R}^3$ denotes the 2-dimensional positions and heading of the robot, $u_t = [v_t~ \omega_t ]^\intercal \in \mathbb{R}^2$ denotes the linear and angular velocities of the robot, $f_t(X_t,u_t) = [
 v_t \cos(\theta_t)~
v_t \sin(\theta_t)~ \omega_t]^\intercal$ denotes the nonlinear unicycle dynamics, $\eta_t \sim \mathcal{N}(\eta_t|0, \Sigma_{\eta_t})$ corresponds to the process noise, and $dt$ denotes the step size for discretization.
We consider the controller parameterization of the form~\eqref{eq:pcontr} where $\mathcal{B}(X_t) = [
    x_t~y_t~\theta_t~1
]^\intercal.$ 

The goal of the SOC problem is for the unicycle to navigate to a desired position without collision with obstacles. Let $\mathcal{A}$ be the set of obstacles. For $j\in\mathcal{A}$, we define \begin{align*}
    \mathcal{K}_j(\tau_t) &= [(x_t - x_{obs,j})^2 + (y_t - y_{obs,j})^2 - (r_{obs,j}+ r_s)^2],%\\
    \end{align*}
where  $(x_{obs,j}, y_{obs,j})$ and $r_{obs,j}$ are the center and the radius of the $j^{\text{th}}$ obstacle, respectively, and {$r_s$ denotes its safety radius}.
Let the unsafe set for the robot be
$\mathcal{C}_u = \{(x,y)\in \mathbb{R}^2| \mathcal{K}_j(x,y) < 0, ~\forall~ j \in \mathcal{A}\}$. Then, the safe set for the collision avoidance constraint $\mathcal{C}_s = {\mathbb{R}^2\backslash \mathcal{C}_u}$. In our simulations, we choose $\psi_j(\tau_t)$ in~\eqref{eq:constrcost} as
\begin{align}
    \psi_j(\tau_t) &= \begin{cases}
        0, & \text{if $\tau_t \in \mathcal{C}_s$}\\
        \gamma( 1 - \tanh(\epsilon \mathcal{K}_j(\tau_t))), &\text{otherwise,}
    \end{cases}
    \label{eq:tanh}
\end{align}
where $\gamma, \epsilon \in \mathbb{R}_+$ are tunable parameters to vary the tightness of the constraint and smoothness of $\psi_j(\cdot)$, respectively. 
For simulations, we choose $\gamma$= $1$ and $\epsilon$= $1$ in the barrier function~\eqref{eq:tanh}.
Other simulation parameters are given in Table~\ref{table:unisim}. 
Fig.~\ref{fig:probtanh} shows the variation of $p(\mathcal{O}^{\text{in},j}_t=1|\tau_t)$ with respect to $\gamma$  in~\eqref{eq:tanh}. 
We see that as $\gamma$ increases, the constraint becomes more conservative, resulting in a lower likelihood of constraint violation. 
\begin{figure}[htpb]
    \centering        \includegraphics[width=0.6\textwidth]{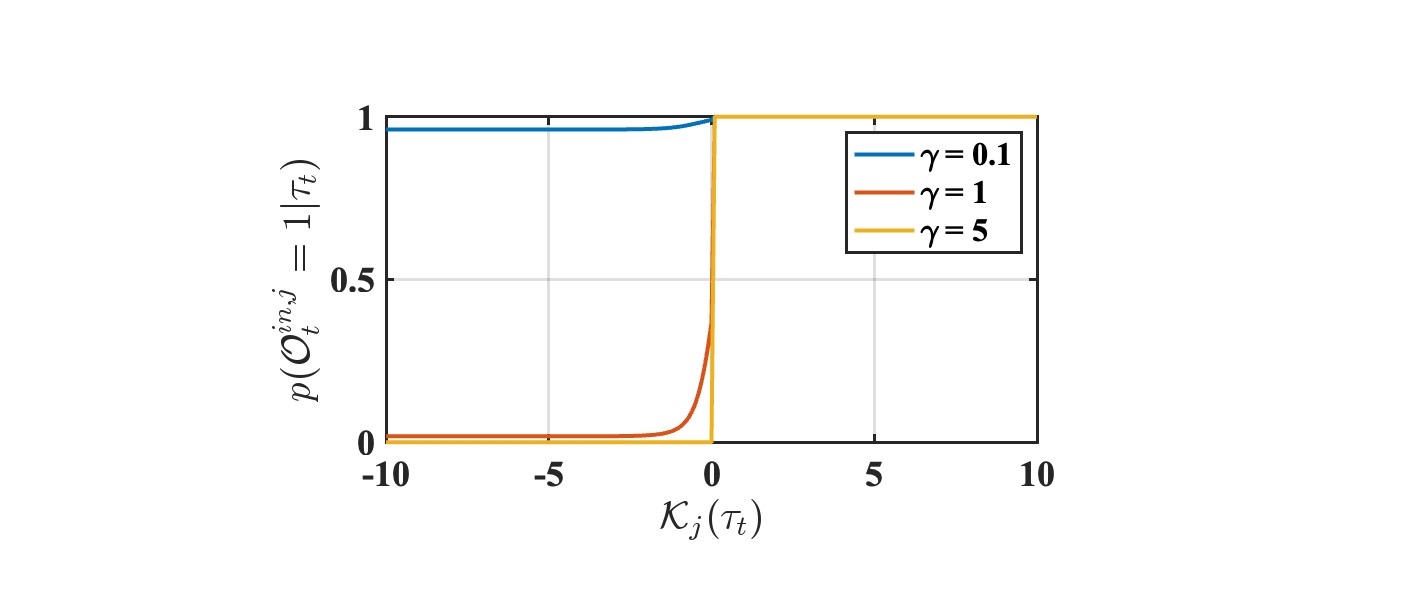} 
\caption{{Variation of $p(\mathcal{O}^{\text{in},j}_t=1|\tau_t)$ for $Q^{in}_j = 1$ and different values of $\gamma$.}} 
    \label{fig:probtanh}
\end{figure}
\begin{table*}[htpb]
\caption{{Simulation parameters for the unicycle example.}}
\label{table:unisim}
\centering
  \scalebox{1}{
\begin{tabular}{|l|l|}
    \hline
    Simulation parameters & Value\\
      \hline
      Process noise covariance, $\Sigma_{\eta_t}$ &  diag($10^{-3},10^{-3},10^{-3}$)\\
      Cost matrices, 
      $\{Q_{t,\text{obs}},Q_T,R_t\}$& $\{20,10~ \mathbb{I}_3,0.5~\mathbb{I}_2\}$\\
    \hline
  \end{tabular}}
\end{table*}

We investigate the effect of the choice of smoothing algorithm on the overall performance of the PIIC algorithm. We employ unscented smoothing (UPIIC), and \textit{factor graph optimization} (FGPIIC) in the E-step of Algorithm~\ref{alg:gpiic}. 
The \textit{unscented smoothing} is analogous to the unscented Kalman smoothing~\cite{sarkka2008} 
except that we utilize it to compute the smoothed \textit{state-control} distribution rather than just the state distribution.
Factor graph optimization solves the smoothing problem as a nonlinear least squares problem. 
This is possible due to the fact that the maximum-a-posteriori (MAP) inference on a nonlinear factor graph with Gaussian noise models is equivalent to nonlinear least squares problem~\cite{fgforrp}.
We interface with the GTSAM library~\cite{gtsam} to implement the factor graph generation and optimization. This approach is well known to be computationally efficient. 

We compare the performance of the UPIIC, FGPIIC with the ILQG algorithm~\cite{ilqg}. The ILQG algorithm does not accommodate state constraints. Hence, we use the modified cost~\eqref{eq:SOCconstr} with the barrier function candidate~\eqref{eq:tanh} to impose the obstacle avoidance constraint for a fair comparison. Fig.~\ref{fig:smoothingcomp} shows the trajectories for 50 MC simulations with the corresponding covariance ellipses for $T = 200$.
\begin{figure}[htpb]
    \centering
    \includegraphics[width = 0.6\textwidth]{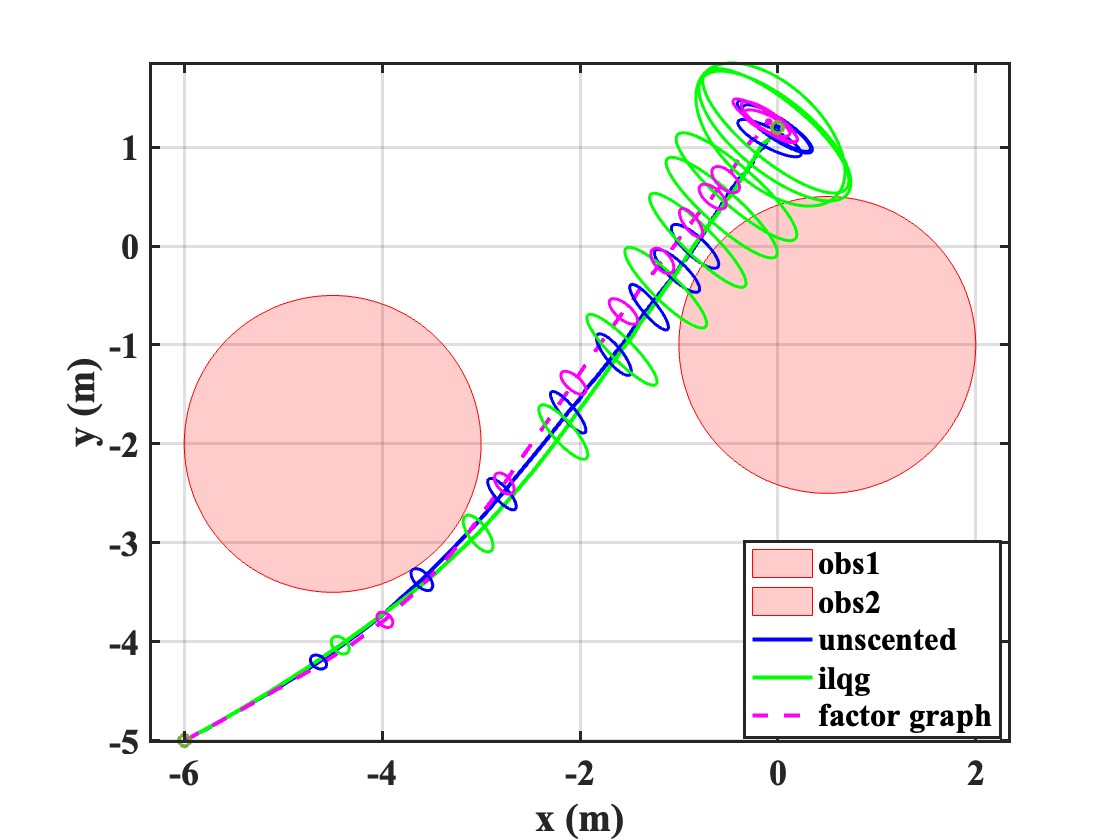}
    \caption{{Comparison of the trajectories with the feedback controllers inferred using ILQG, PIIC with unscented smoothing, and factor graph optimization.} }
    \label{fig:smoothingcomp}
\end{figure}

The mean and standard deviation of the incurred trajectory cost are shown in Table~\ref{table:smcomptable}.
We observe that the FGPIIC has the superior performance followed by UPIIC and ILQG. This can be attributed to the fact that each iteration of the FGPIIC performs multiple iterations of factor graph optimization until a level of convergence is reached whereas the UPIIC performs only one pass of the smoothing step per iteration, yielding in sub-optimal trajectories compared to FGPIIC. We also observe that the ILQG approach suffers from poor convergence, leading to higher variance in the trajectories and a greater number of constraint violations. We have repeated the same comparison for a target reaching problem without obstacles and the resulting trend was similar. 
\begin{table}[htpb]
\caption{{Comparison of the average cost and standard deviation for 50 MC simulations with the feedback controllers inferred using ILQG, UPIIC, and FGPIIC for unicycle target reaching example with and without obstacles.}\\}
\label{table:smcomptable}
\centering
 \scalebox{1} {
\begin{tabular}{|l|l|l|}
    \hline
    & {Without obstacles} & 
      {With obstacles} \\
      \hline
    UPIIC & 
    64.12 $\pm$ 4.26 &
    92.26 $\pm$ 31.39
    \\
     FGPIIC &
     \textbf{58.69 $\pm $ 3.18}&
    \textbf{61.89 $\pm$ 4.61}
    \\
         ILQG & 
         79.16 $\pm$ 19.02&
        189.02 $\pm$ 196.68
        \\
    \hline
  \end{tabular}}
\end{table} 

We also perform an empirical study on the effect of $\gamma$ in the  barrier function~\eqref{eq:tanh} on the inferred controller and the trajectory of the unicycle robot. We restrict to a single obstacle to visualize a more pronounced effect. Fig.~\ref{fig:gammacomp} shows the two trajectories resulting from  controllers inferred using different values of $\gamma$. We observe that for higher values of $\gamma$, the minimum distance of the trajectory from the obstacle increases, i.e., the controller becomes more conservative. 
\begin{figure}[htpb]
    \centering
    \includegraphics[width = 0.6\textwidth]{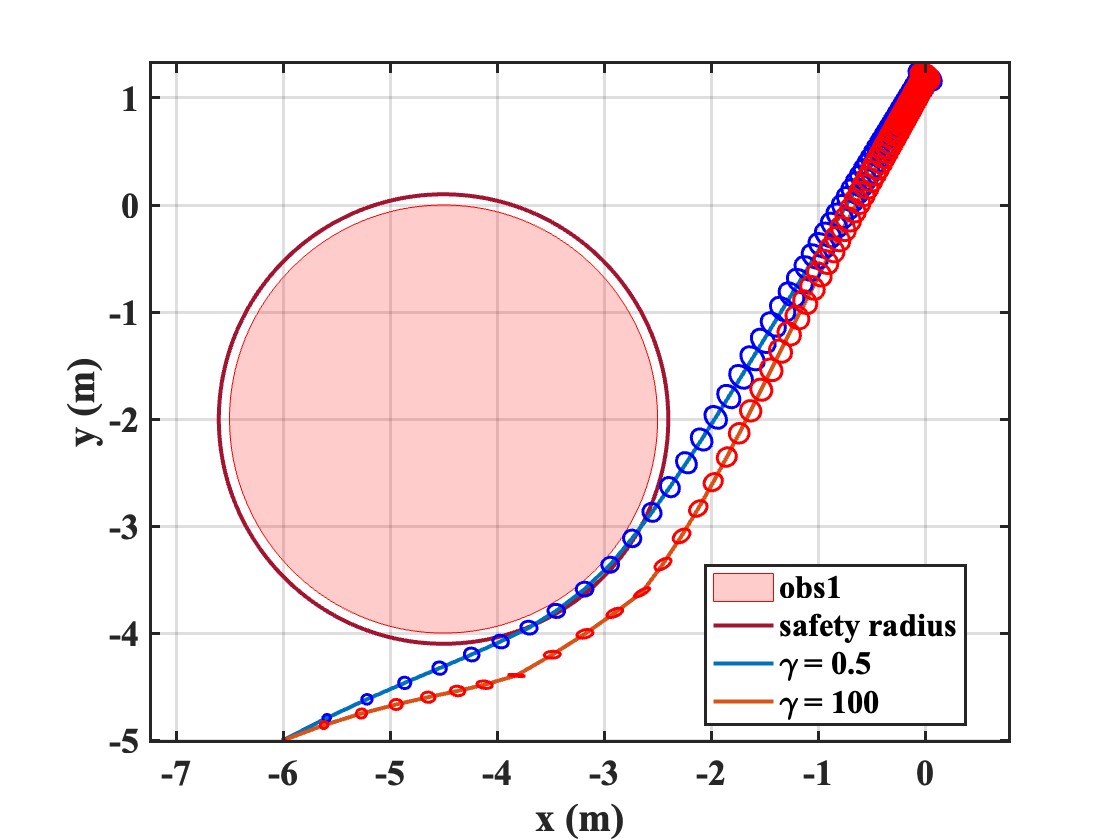}
    \caption{{Comparison of the trajectories of unicycle model with covariance ellipses for $\gamma = 0.5$ and $\gamma = 100$.}}
    \label{fig:gammacomp}
\end{figure}
\vspace{0.00mm} 
Due to the presence of process noise there is finite probability of constraint violation. However, for higher values of $\gamma$, the conservatism of the controller yields less constraint violations, i.e., the deviations from the inferred trajectory exist but remain in the safe set $\mathcal{C}_s$,  resulting in satisfaction of the actual constraint with a higher probability. We corroborate the claim in Table~\ref{table:cvgamma}, which shows that the number of constraint violations decreases as $\gamma$ increases.
\begin{table}[htpb]
\caption{{Comparison of the number of constraint violations occurred in 50 MC simulations for various values of $\gamma$.}\\}
\label{table:cvgamma}
\centering
 \scalebox{1}{
\begin{tabular}{|c|l|l|l|l|}
    \hline
     $\gamma$ & 1 & 2 & 5 & $\geq 10$\\
     \hline
      $\#$ of constraint violations &
    38
     &56
        & 11
        & 0\\
    \hline
  \end{tabular}}
\end{table}

\subsection{Formation Control}
\label{sec:formncontr}
We consider formation control of four unicycle robots modeled as~\eqref{eq:unicycle}.
The objective is to find a stochastic optimal controller that navigates to desired goal positions with minimal control energy applied by each agent while closely maintaining a desired square formation and avoiding collision with obstacles. 
We define the individual cost of robot $i$ as $\mathcal{C}^i_{x,al}(\tau^i_t) = (X^i_t - X^i_d)^\intercal Q^i_t (X^i_t - X^i_d) + 
\sum_{j=1}^{n_{\text{in}}}\psi^\intercal_j(\tau^i_t) ~Q^{i, \text{in}}_j  ~\psi_j(\tau^i_t)$, $\mathcal{C}^i_{u,al}(\tau^i_t) = (u^i_t)^\intercal R^i_t (u^i_t)$,
where at time $t$, $X^i_t$ and $X^i_d$ denote the state and the desired state of agent $i$, respectively, $u^i_t$ denotes the control input of agent $i$, and $\psi_j(\cdot)$ is the barrier function as in~\eqref{eq:tanh}. 
Let $X_t = [
    (X^1_t)^\intercal~(X_t^2)^\intercal~\cdots~(X_t^N)^\intercal
]^\intercal$ $ \in \mathbb{R}^{Nn_x}$ be the state  of all the agents $i \in \mathcal{V}$ in the formation. We assume a linear controller for each agent of the form $\mathbb{E}(u^i_t) = K_t^i X_t + k_t^i.$ Let $B \in \mathbb{R}^{N \times M}$ denote the incidence matrix of an undirected graph $\mathcal{G} = \{\mathcal{V}, \mathcal{E}\}$ corresponding to the formation, where $M$ is the cardinality of the edge set $\mathcal{E}$. Define the  formation cost as $\mathcal{C}_{nl}(\tau_t) =((B \otimes \mathbb{I}_{n_x})^\intercal X_t - \delta_*)^\intercal Q_{f} ((B \otimes \mathbb{I}_{n_x})^\intercal X_t - \delta_*),$
where 
$\delta_* = [
    {\delta^1_*}^\intercal~{\delta^2_*}^\intercal~\cdots~{\delta^M_*}^\intercal
]^\intercal \in \mathbb{R}^{Mn_x}$ represents the vector of formation targets along each edge $e \in \mathcal{E}$, and $Q_{f}$ is a positive semi-definite block diagonal cost matrix. For the unit square formation in the simulation, $\mathcal{V}= \{1,2,3,4\},$ $\mathcal{E} = \{(2,1),(4,1),(2,3),(4,3)\},$
and $\delta_* = [
    0~ 1~ 0~ -1~ 0~ 0~ 0~ 1~ 0~ 1~ 0~ 0
]^\intercal$.
The total trajectory cost for the optimal formation control problem is given by $\mathcal{C}(x_{0:T},u_{0:T-1}) =\sum_{t=0}^T \mathcal{C}_{nl}(\tau_t) +\sum_{i=1}^4 [ \sum_{t=0}^{T} \mathcal{C}^i_{x,al} (\tau^i_t)+ \sum_{t=0}^{T-1} \mathcal{C}^i_{u,al}(\tau^i_t)],$
where $T$ is the time horizon  set to $100$. Additional simulation parameters are given in Table~\ref{table:frmsim}.
\begin{table}[htpb]
\caption{{Simulation parameters for the formation control example.}}
\label{table:frmsim}
\centering
 \scalebox{1}{
\begin{tabular}{|l|l|}
    \hline
    Simulation parameters & Value\\
      \hline
      Process noise covariance, $\Sigma^i_{\eta^i_t}$ &  diag($10^{-3},10^{-3},10^{-4}$)\\
      Linear velocity limits, $v^i_t$ & $[0,8]~m/s$\\
      Angular velocity limits, $\omega^i_t$&$[-1.5,1.5]~\text{rad}/s$\\
    $\{Q^i_t,Q^i_{t,\text{obs}},Q^i_{t,\text{lim}},Q^i_T,R^i_t,Q_{t,f}\}$ & $\{\mathbb{I}_3,50~\mathbb{I}_4,50~\mathbb{I}_4,50~ \mathbb{I}_3,\mathbb{I}_2,50~ \mathbb{I}_{12}\}$\\
    \hline
  \end{tabular}}
\end{table}

We impose a \textit{3-agent partially decentralized} structure on the controller wherein each agent has access to the state information of itself and two other agents in the formation. Figure~\ref{fig:frmn} shows the formation trajectory of the robots with covariance ellipses using {Algorithm~\ref{alg:gpiic}}. We observe that the agents reach close to their target positions while avoiding the obstacles and respecting the square formation as closely as possible. 
\begin{figure}[htpb]
    \centering
    \includegraphics[width = 0.6\textwidth]{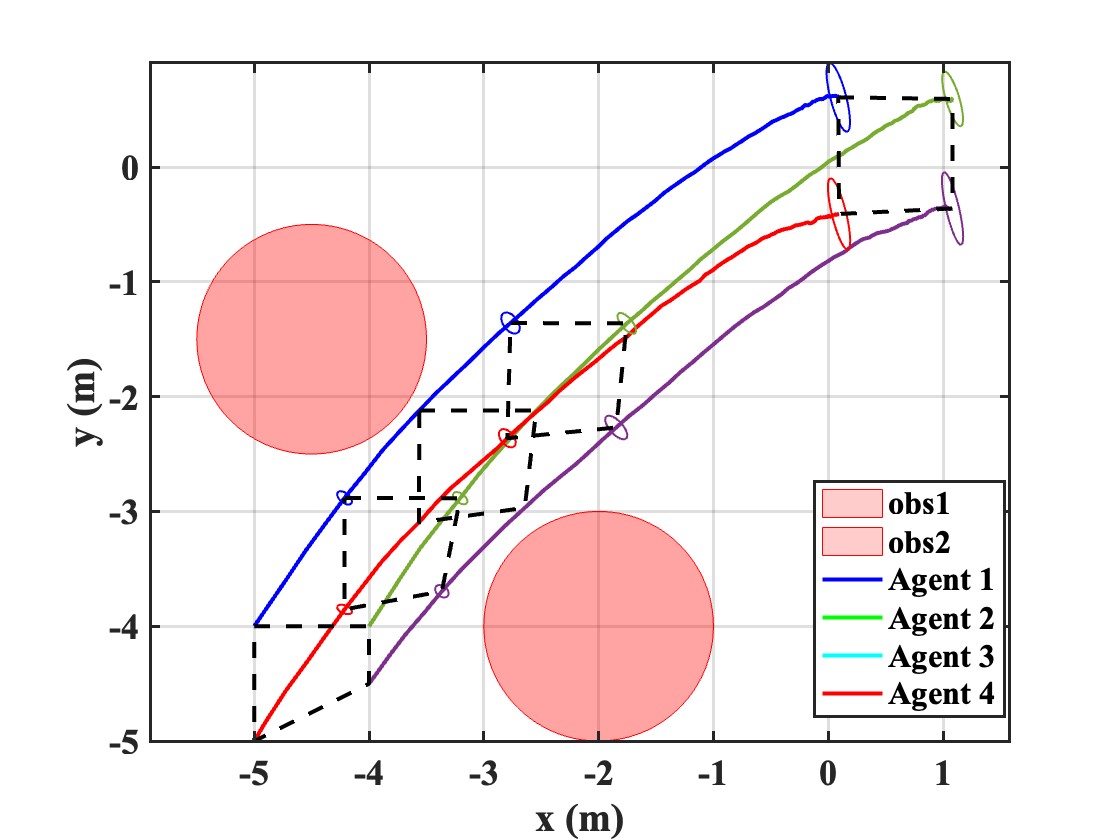}
    \caption{{Snapshots of X-Y trajectory of the unicycle formation with corresponding covariance ellipses.} }
    \label{fig:frmn}
\end{figure}

We next investigate this problem under three additional controller structures. A \textit{centralized} structure is where each agents has access to the state information of all the agents in the formation, a \textit{2-agent partially decentralized} structure is where each agent has access to the state information of itself and the agent diagonally opposite to it in the formation, and a \textit{decentralized} structure is where each agent has access to only its own state information. Table~\ref{table:frmcomptable} shows the average cost and standard deviation for 50 MC simulations. The centralized structure incurs the least average cost owing to its full information of the global state of the agents. It is followed by the 3-agent and 2-agent partially decentralized structures, respectively. The decentralized structure incurs the highest average cost. The increase of the  cost is correlated to the decrease in the information available to each agent, yielding controllers with degrading performance.
\begin{table}[htpb]
\caption{{Comparison of the average cost and standard deviation for 50 MC simulations with 
different controller structures for the 4-unicycle formation control example.\\}}
\label{table:frmcomptable}
\centering
  \scalebox{1}{
\begin{tabular}{|l|l|}
    \hline
    Controller structure & Average Cost\\
      \hline
    Centralized & 
    \textbf{664.67 $\pm$ 120.61 }
    \\
     Partially decentralized (3-agent) &
     \textbf{713.21 $\pm$ 116.82}
    \\
    Partially decentralized (2-agent) &
     728.88 $\pm$ 128.41\\
        Decentralized & 
         749.22 $\pm$ 135.88\\
    \hline
  \end{tabular}}
\end{table} 

\subsection{Quadcopter}\label{sec:quadsim}

% This problem is studied both with and without external wind.

To incorporate the wind in quadcopter dynamics, we use the formulation in~\cite{iekfquad} except that  rotation is represented using roll-pitch-yaw angles instead of quaternions. {The angular rates of rotation and the thrust are the control inputs.}
% \begin{align}
%     \dot{d} &= Ad\nonumber\\
%     v_w &= Cd,
%     \label{eq:winddyn}
% \end{align}
% where $d \in \mathbb{R}^{n_d \times 1}$, $A \in \mathbb{R}^{n_d \times n_d}$, and $C \in \mathbb{R}^{3 \times n_d}$. The wind velocity modeled as~\eqref{eq:winddyn} can be used to express both stationary and non-stationary wind profiles. Then, the quadcopter dynamics with external wind is given by
\begin{align}
     \dot{x}&= R_{\psi} R_{\theta} R_{\phi} v_r + Cd\nonumber\\
    \dot{v}_r &= v_r \times \omega + R_{\phi}^\top R_{\theta}^\top R_{\psi}^\top \bm{g} + \frac{1}{m} \bm{f} + \frac{1}{m} \bm{f}_{drag} \nonumber \\
    \begin{bmatrix}
         \dot{\phi}\\
      \dot{\theta}\\
       \dot{\psi}
    \end{bmatrix} &= \begin{bmatrix}
        1&0&-\sin(\theta)\\
       0&\cos(\phi)&\sin(\phi)\cos(\theta)\\
       0&-\sin(\phi)&\cos(\phi)\cos(\theta)
     \end{bmatrix}^{-1} \omega\\
\dot{d} &= Ad,  \label{eq:quadwinddyn}
\end{align}
where $v_r$ is the relative air velocity in body frame. Similar to~\cite{iekfquad}, the drag force $\bm{F}_{drag}$ is modeled in terms of $v_r$ as
% \begin{align}
    $\bm{f}_{drag} = \frac{1}{2} \rho D |v_r|v_r$,
% \end{align}
where $\rho$ is the air density, $D$ is the diagonal coefficient matrix. We define the state $X = \begin{bmatrix}
    x&v&\phi&\theta&\psi&d
\end{bmatrix}^\top \in \mathbb{R}^{12 \times 1}$, control $U = \begin{bmatrix}
    \omega_x&\omega_y&\omega_z&F
\end{bmatrix}^\top \in \mathbb{R}^{4 \times 1}$, and the dynamics $F_t(X, U)$ given in~\eqref{eq:quadwinddyn} for the quadcopter simulations with external wind. The SOC problem was solved using Algorithm~\ref{alg:gpiic} with a linear basis controller $\mathcal{B}^{LWB} = \begin{bmatrix}
    x&v&\phi&\theta&\psi&d&1
\end{bmatrix}^\top$.
The parameters used in this simulation are summarized in Table~\ref{table:quadsim}.

\begin{table}[htpb]
\caption{ {Simulation parameters for the quadcopter example.}}
\label{table:quadsim}
\centering
   {
\begin{tabular}{|l|l|}
    \hline
    Simulation parameters & Value\\
      \hline
      Step size, $dt$ & $0.02$\\
      Horizon, $T$&$650$ steps\\
      Initial state mean, $\mu_{x_0} $& $[20~20~0~0~0~0~0~0~0~2~-2~0]^\top$\\
      Initial state covariance, $\Sigma_{x_0}$ & $10^{-4} \mathbb{I}_{12}$\\
      Initial control mean, $\mu_{U_0}$ & $[0~~0~~0~~\bm{g}]^\top$\\
      Controller covariance, $\Sigma_{\delta_t}$ & $10^5 \mathbb{I}_{4}$\\
      Goal state, $z_T$ & $[35~33~7~0~0~0~0~0~0~2~-2~0]^\top $\\
      Process noise covariance, $\Sigma_{\eta_t}$ &  diag($
0,0,0,10^{-4},10^{-4},10^{-4},0,0,0,0,0,0$)\\
      Barrier function parameter, $\gamma$& $0.5$\\
      Barrier function parameter $\epsilon$& $1$\\
      State cost matrix, $Q_t$ &  $\mathbb{I}_{12}$\\
      Avoidance cost matrix, 
      $Q_{t,\text{obs}}$& $1000$\\
      Terminal cost matrix, $Q_T$ &  diag($100,100,100,10,10,10,10,10,10,10,10,10$)\\
      Control cost matrix, $R_t$ &  diag($0.1,0.1,0.1,0.1$)\\
    \hline
  \end{tabular}}
\end{table}

 The key advantage of the PIIC algorithm is the ability to admit nonlinear optimal controllers for nonlinear systems. Hence, we define a nonlinear obstacle-aware basis by appending a linear basis  $\mathcal{B}^{LWB} = [x~~v~~\phi~~\theta~~\psi~~d~~1]^\top$ with the sum of distances of the quadcopter to the boundaries of all the obstacles in the environment. It is denoted by $\mathcal{B}^{OA} = [x~~v~~\phi~~\theta~~\psi~~d~~1~~\sum_{i=1}^{n_o} c_i]^\top$ where $c_i = (x - x_{obs,i})^2 + (y - y_{obs,i})^2 - (r_{obs,i})^2 \ \forall i$.

%  \begin{figure}
%     \centering
%     % \begin{subfigure}{0.46\textwidth}
%     \includegraphics[width=0.46\textwidth]{figures/quadcopter/_2.png}
%     % \end{subfigure}
%     % \hfill
%     % \begin{subfigure}{0.46\textwidth}
%     \includegraphics[width=0.46\textwidth]{figures/quadcopter/_2-1.png}
%     % \end{subfigure}
%     \caption{Position trajectory of the quadcopter example with external wind.}
%     \label{ref:trajoawind}
% \end{figure}
% \begin{figure}
% \centering
%     \includegraphics[width=0.46\textwidth]{figures/quadcopter/_2-2.png}
%     \caption{Trajectory of control inputs of the quadcopter example with external wind.}
%     \label{fig:coawind}
% \end{figure}

\begin{figure}[htpb]
    \centering
    % \begin{subfigure}{0.46\textwidth}
    \includegraphics[width=0.49\textwidth]{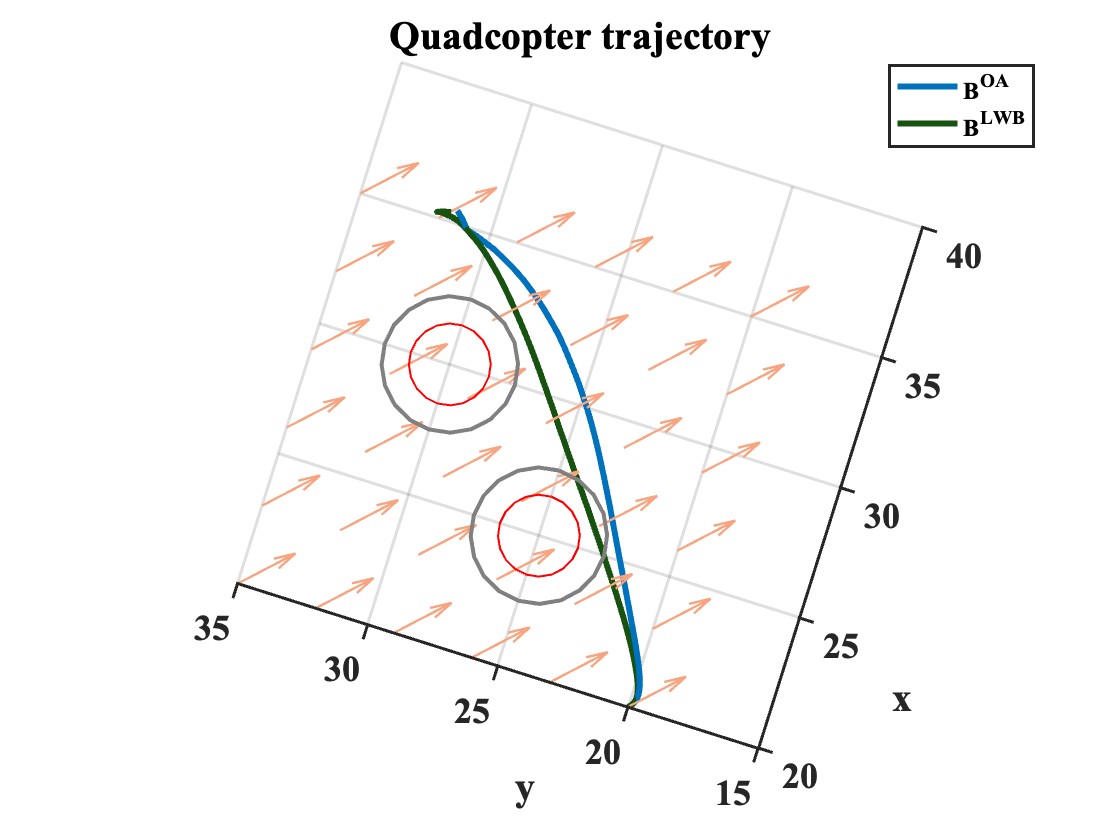}
    % \end{subfigure}
    % \hfill
    % \begin{subfigure}{0.46\textwidth}
    \includegraphics[width=0.49\textwidth]{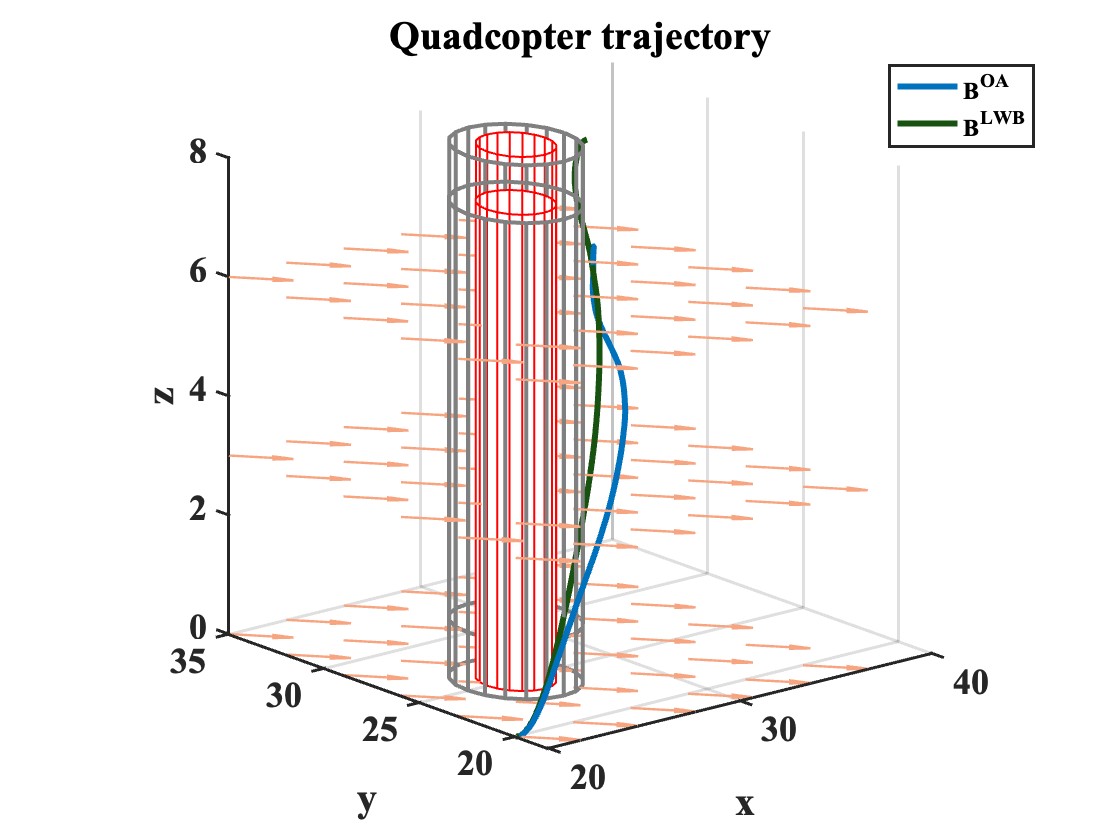}
    % \end{subfigure}
    \caption{Position trajectory of the quadcopter example with external wind. Red and blue circles correspond to the simulated and actual obstacles.}
    \label{fig:trajoawind}
\end{figure}
We compare the robustness of the controller using the bases $\mathcal{B}^{LWB}$ and  $\mathcal{B}^{OA}$ against  changes in the obstacle radii during inference and simulation. Figure~\ref{fig:trajoawind} shows the position trajectory of the quadcopter for a changed radius of obstacle during inference and simulation. Figure~\ref{fig:coawind} shows the difference in the control input before and after the change in obstacle radii for each choice of basis function. The controller with $\mathcal{B}^{LWB}$ basis cannot adapt to the changes in the obstacles, whereas the controller with $\mathcal{B}^{OA}$ basis is able to modify the control input to avoid the obstacle. This is due to the fact that the $\mathcal{B}^{OA}$ encodes the nonlinear function of distance from the obstacle which is absent in $\mathcal{B}^{LWB}$.
\begin{figure}[htpb]
\centering
    \includegraphics[width=0.49\textwidth]{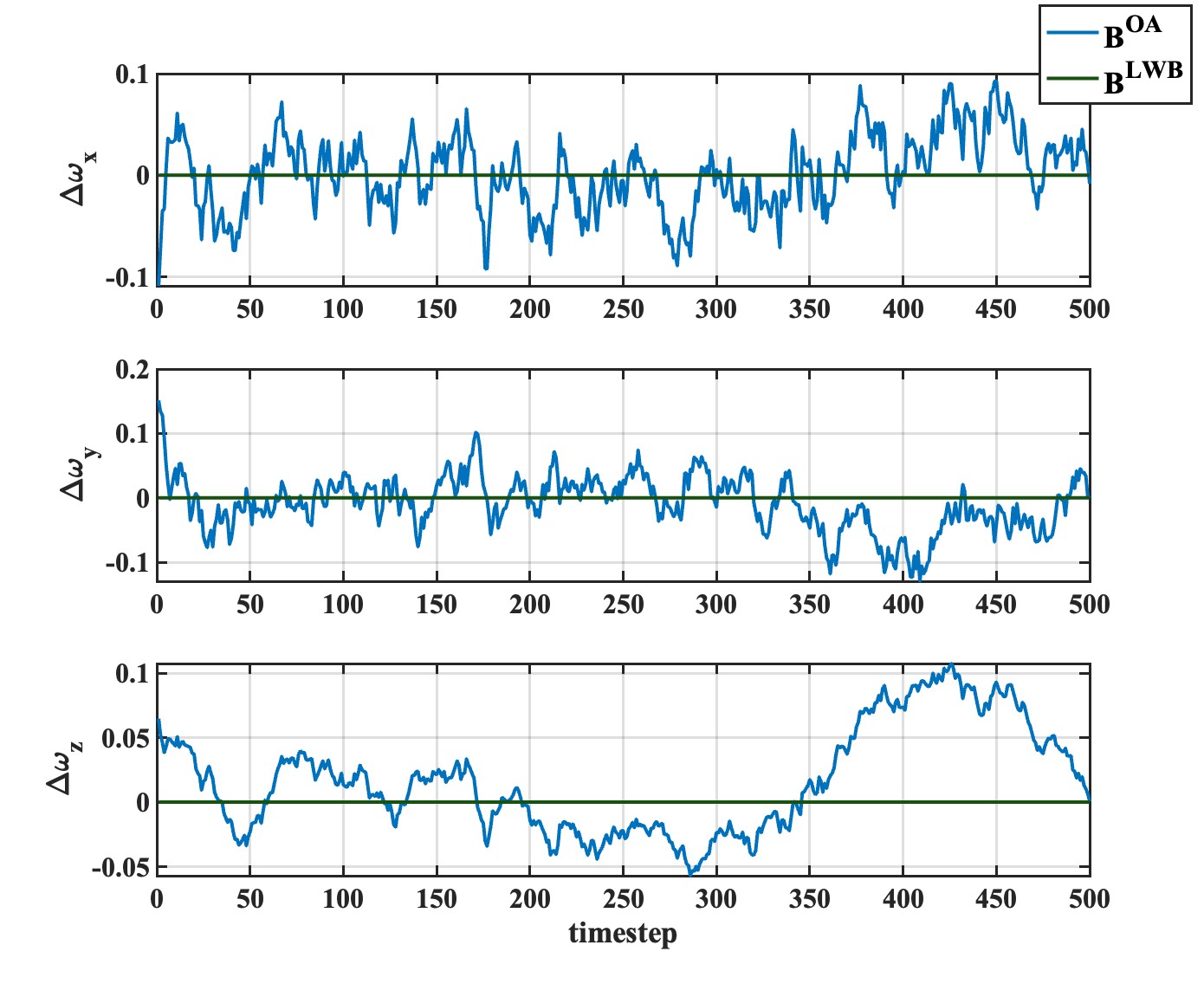}
        \includegraphics[width=0.49\textwidth]{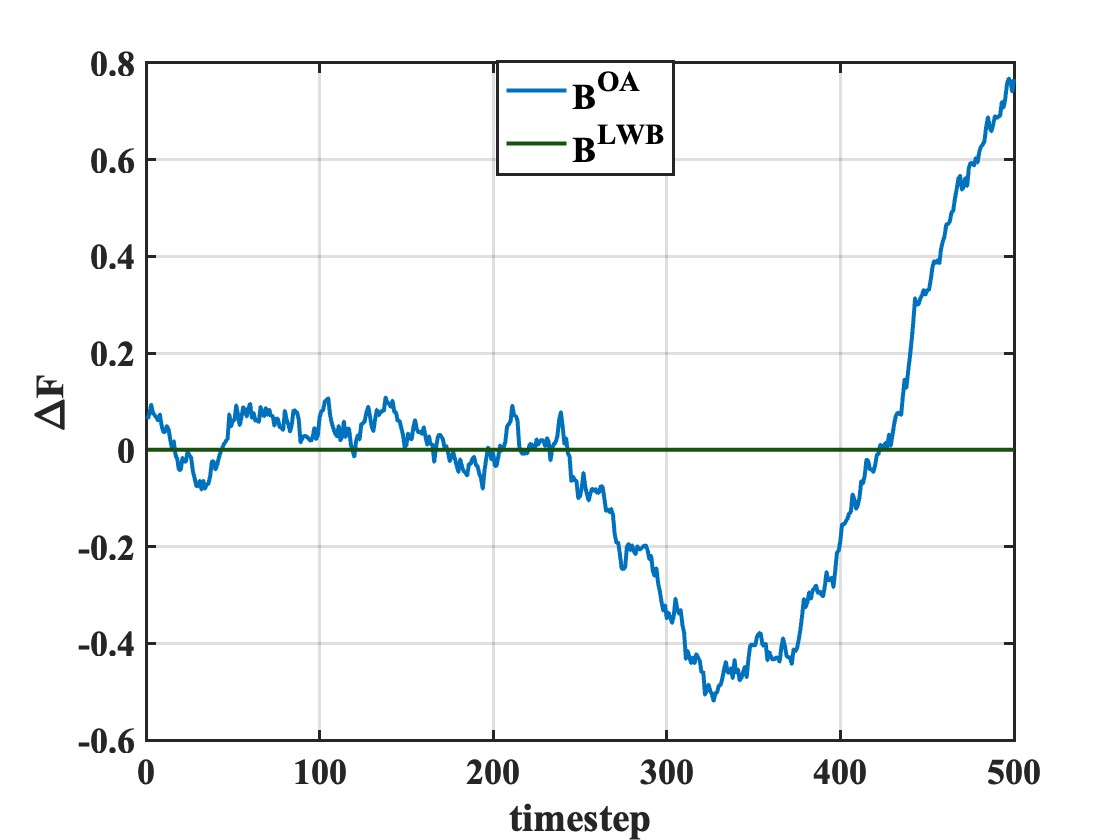}
    \caption{Comparison of the $\mathcal{B}^{LWB}$ and $\mathcal{B}^{OA}$ controllers for the effect of change in the radius of the obstacles.}
    \label{fig:coawind}
\end{figure}

\section{Conclusions and future work}
\label{sec:conclusion}

We present a parameterized inference-based approach to approximate constrained SOC. Our approach employs a barrier function to impose inequality constraints {on the states and controls,} and creates controllers satisfying given structural constraints. We establish that our approach encompasses existing algorithms as special cases, such as the LQR and the I2C algorithm. 
The numerical simulations demonstrate that our approach outperforms the ILQG for constrained control and that using factor graph optimization incurs lower average cost than unscented smoothing. Our approach can also optimize control performance while satisfying  structural constraints. Future work includes investigating structured control in a model-free setting for multi-agent systems. 
%%%%%%%%%%%%%%%%%%%%%%%%%%%%%%%%%%%%%%%%%%%%%%%%%%%%%%%%%%%%%%%%%%%%%%%%%%%%%%%%
 \section{Appendix: Equivalence of PIIC and I2C}\label{sec:eqpiici2c}
We suppress the notation $q(\tau)$ under the expectation for brevity. 
Using the block matrix inversion identity in~\cite{matcb}, the inverse term in~\eqref{eq:piic} yields
\begin{align}
   \mathbb{E}[\mathcal{B}_t(x_t)\mathcal{B}_t(x_t)^\intercal]^{-1} = \begin{bmatrix}
        \Sigma_{x_t}^{-1}&-\Sigma_{x_t}^{-1}\mu_{x_t}\\
        -\mu_{x_t}^\intercal\Sigma_{x_t}^{-1}&1+\mu_{x_t}^\intercal\Sigma_{x_t}^{-1}\mu_{x_t}
    \end{bmatrix},
    \label{eq:bxbxTinv}
\end{align}
where $\mu_{x_t}$ and $\Sigma_{x_t}$ represent the mean and covariance of $x_t$ in the smoothed state-control distribution, respectively. The second term in~\eqref{eq:piic} can be expressed as
\begin{align}
    \mathbb{E}\left [\mathcal{B}_t(x_t)u_t^\intercal\right ]
    = \begin{bmatrix}
        \Sigma_{x_tu_t}+\mu_{x_t}\mu_{u_t}^\intercal\\\mu_{u_t}^\intercal
    \end{bmatrix},
    \label{eq:xuT}
\end{align}
where $\Sigma_{x_tu_t} \in \mathbb{R}^{n_x \times n_u}$ is the cross-covariance between $x_t$ and $u_t$,  and $\mu_{u_t}$ is the mean of $u_t$ in the smoothed state-control distribution.
Substituting~\eqref{eq:bxbxTinv} and~\eqref{eq:xuT} in~\eqref{eq:piic} yields
\begin{align}
    \Theta_t^{k+1} &=\begin{bmatrix}
        \Sigma_{x_t}^{-1}\Sigma_{x_tu_t}\\
        -\mu_{x_t}^\intercal\Sigma_{x_t}^{-1} \Sigma_{x_tu_t}+\mu_{u_t}^\intercal
    \end{bmatrix}.
    \label{eq:piicupd}
\end{align}
Comparing~\eqref{eq:piicupd} and $\Theta_t = \begin{bmatrix}
    K_t&k_t
\end{bmatrix}^\intercal$ yields
\begin{align}
    K_t &= \Sigma_{x_tu_t}^\intercal\Sigma_{x_t}^{-T} = \Sigma_{x_tu_t}^\intercal\Sigma_{x_t}^{-1},\nonumber\\
    k_t &= \mu_{u_t}-\Sigma_{x_tu_t}^\intercal\Sigma_{x_t}^{-T}\mu_{x_t} 
    = \mu_{u_t}-K_t\mu_{x_t}.
    \label{eq:Kki2c}
\end{align}
The covariance $\Sigma_{\delta_t}$ can be written as
\begin{align}
  \Sigma_{\delta_t} &= \mathbb{E} [(u_t - K_t x_t - k_t)(u_t - K_t x_t - k_t)^\intercal ].
  \label{eq:cov}
\end{align}
Substituting~\eqref{eq:Kki2c} in~\eqref{eq:cov} and rearranging yields
\begin{align}
    \Sigma_{\delta_t} &=  \Sigma_{u_t} - \Sigma_{x_tu_t}^\intercal\Sigma_{x_t}^{-1}\Sigma_{x_tu_t}.
    \label{eq:Si2c}
\end{align}
Note that~\eqref{eq:Kki2c} and~\eqref{eq:Si2c} correspond to the parameter update equations for the conditional control distribution in~\cite{effsoc}. Hence, for the given assumptions on  $\mathcal{B}_t(x_t)$ and  $\Theta_t$, the PIIC and the Gaussian-I2C formulations are equivalent. Since~\cite{i2c} guarantees the equivalence of I2C to LQR {for} linear deterministic dynamics with infinitely broad priors (i.e., $\Sigma_{\eta_t} \rightarrow 0$, $\Sigma^{-1}_{\delta_t} \rightarrow 0$), we omit details of the derivation and extend the claim to the PIIC. 

% \printbibliography
\bibliographystyle{plain}
\bibliography{references_journal}
\end{document}